\newtheorem{lemma}{Lemma}
\newtheorem{remark}{Remark}
\newcommand{\st}{{\mathrm{s.t.}}}
\newcommand{\bh}{\mathbf{h}}
\newcommand{\bg}{\mathbf{g}}
\newcommand{\bI}{\mathbf{I}}
\newcommand{\bp}{\mathbf{p}}
\newcommand{\bw}{\mathbf{w}}
\newcommand{\Sk}{\mathtt{S}_k}
\newcommand{\Dk}{\mathtt{D}_k}
\newcommand{\R}{\mathtt{R}}
\newcommand{\ID}{\mathtt{ID}}
\newcommand{\EH}{\mathtt{EH}}
\newcommand{\K}{\mathcal{K}}
\newcommand{\bwi}{\mathbf{w}^{(\kappa)}}
\newcommand{\betai}{\beta^{(\kappa)}}
\newcommand{\pki}{p_k^{(\kappa)}}
\newcommand{\varthetai}{\vartheta^{(\kappa)}}
\newcommand{\alphadi}{\alpha_2^{(\kappa)}}
\g@addto@macro\normalsize{%
	\setlength\abovedisplayskip{1.5pt}
	\setlength\belowdisplayskip{2.5pt}
	\setlength\abovedisplayshortskip{1.5pt}
	\setlength\belowdisplayshortskip{2.5pt}
}
\newcommand*{\hili}{\color{black}}
\begin{document}

\title{ Optimization of Rate Fairness in Multi-Pair Wireless-Powered Relaying Systems}
\author{Van-Phuc Bui, Van-Dinh Nguyen, Hieu V. Nguyen, Octavia A. Dobre, and Oh-Soon Shin\\
 \thanks{This research was supported in part by Basic Science Research Program through the National Research Foundation of Korea (NRF) funded by the Ministry of Education (No. 2017R1D1A1B03030436),  in part by  the Luxembourg National Research Fund (FNR) in the framework of the FNR-FNRS bilateral project ``InWIP-NET: Integrated Wireless Information and Power Networks,'' and in part by the Natural Sciences and Engineering Research Council of Canada (NSERC), through its Discovery program. (\emph{Corresponding author: Oh-Soon Shin}.)
 	
 V.-P. Bui, H. V. Nguyen, and O.-S. Shin are with the Department of ICMC Convergence Technology and School of Electronic Engineering, Soongsil University, Seoul 06978, South Korea (e-mail: vanphucbui@soongsil.ac.kr; hieuvnguyen@ssu.ac.kr; osshin@ssu.ac.kr).
 
V.-D. Nguyen was with the Department of ICMC Convergence Technology, Soongsil University, Seoul, South Korea. He is now with the Interdisciplinary Centre for Security, Reliability and Trust (SnT) – University of Luxembourg, L-1855 Luxembourg (email: dinh.nguyen@uni.lu).
 
 O. A. Dobre is with the Faculty of Engineering and Applied Science, Memorial University, St. John’s, NL A1X3C5, Canada (e-mail: odobre@mun.ca).
 \vspace{-10pt}
}
\vspace{-10pt}
}
\maketitle
\vspace{-10pt}
\begin{abstract}
This letter considers a multi-pair decode-and-forward relay network where a power-splitting (PS) protocol is
adopted at the energy-constrained relay to provide simultaneous
wireless information and energy harvesting (EH). To achieve
higher efficiency of EH, we propose a new PS-based EH architecture at the relay by incorporating an alternating current (AC)
computing logic, which is employed to directly use the wirelessly
harvested AC energy for computational blocks. Under a nonlinear
EH circuit, our goal is to maximize the fairness of end-to-end rate
among user pairs subject to power constraints, resulting in a non-convex problem. We propose an iterative algorithm to achieve
a suboptimal and efficient solution to this challenging problem
by leveraging the inner approximation framework. Numerical
results demonstrate that the proposed algorithm outperforms the
traditional direct current computing and other baseline schemes.
\end{abstract}
\vspace{-10pt}
\begin{IEEEkeywords}
Inner approximation,  relay network, simultaneous wireless information and power transfer (SWIPT).
\end{IEEEkeywords}

\vspace{-0.3cm}
\section{Introduction}\label{sec:intro}
Wireless relays have been considered to improve the spectral
efficiency and reliability, and to extend the coverage area of
wireless networks. Among numerous proposed relaying
protocols, decode-and-forward (DF) and amplify-and-forward
are most widely studied in the literature. The former has
drawn considerable attention due to its superior performance
compared to the latter \cite{Rankov:JSAC:07}.

With the dramatic growth of user devices, especially those
with low-cost and low-power requirements, we can envisage
future networks employing wireless relays capable of using
harvested power for information forwarding, rather than depending
on the grid power supply. To this end, simultaneous
wireless information and power transfer (SWIPT) technique is
an effective means to realize both energy harvesting (EH) and
information decoding from the transmitted radio-frequency
(RF) signals, prolonging the network lifetime of relays \cite{NasirTWC13,ClerckxJSAC19,COMML:Octavia}.
Various SWIPT-based relay schemes based on time-switching
relaying (TSR) and power-splitting relaying (PSR) have been
proposed, including multiple-input multiple-output (MIMO)
DF relay \cite{BenkhelifaJSAC16}, self interference-aided EH relaying \cite{ZhangTWC17}, and relays
with interference alignment \cite{ChuTCOM18}. The common approach in
the aforementioned works is that the EH power circuit converts
the harvested alternating current (AC) power to direct current
(DC) power to assist user data transmission and activate basic
functions (i.e., operating circuits and computational blocks).
Given that the wireless EH performance is very limited due to
high path-loss in far-field transmission, the use of DC computing
(DCC) results in significant system performance loss. The
reason is that the conversion efficiency of current rectifiers is
relatively low (i.e., about $50\sim60\%$). Fortunately, the works in
\cite{WanESL17} and \cite{SalmanVLSI18} have demonstrated through practical experiments that the AC power harvested from the RF signals can be
directly used to activate computational blocks. The benefit of
using AC computing (ACC) was first revealed in downlink SWIPT
 \cite{TranCOMML19} and NOMA-SWIPT networks \cite{NguyenACCESS2019}.

Motivated by the above discussion, we study a multi-pair
wireless-powered relaying system, where a multiple-antenna
DF relay receives both information and energy from source
nodes in the first phase and then utilizes the energy to forward
the information to destination nodes in the second phase. Contrary
to the previous works on SWIPT-based relay networks
\cite{NasirTWC13,BenkhelifaJSAC16,ZhangTWC17,ChuTCOM18}, this letter poses the following completely new
issues: $(i)$ A novel PSR architecture-enabled ACC is proposed
by leveraging charge-recycling theory, which aims at using the
EH more efficiently due to its low-power consumption and
no conversion loss; $(ii)$ Successive interference cancellation
(SIC) technique is adopted at the information decoding (ID) receiver \cite{TsePramodBook05}, which
is capable of improving both spectral efficiency and user
fairness. {\hili We consider a new problem
of max-min end-to-end (e2e) rate among user pairs under
a practical model of EH circuit \cite{XiongTWC17}, which is formulated
as a non-convex program.} Towards an efficient solution, we
convert the original problem into an equivalent non-convex
problem in a more tractable form, and then develop a lowcomplexity
iterative algorithm with convergence guaranteed.
By leveraging the inner approximation (IA) framework \cite{Beck:JGO:10},
the proposed algorithm solves a second order cone program
(SOCP) at each iteration, which is very efficient for practical
implementations. Numerical results are provided to confirm
that our proposed algorithm is efficient in terms of the e2e
rate fairness.

\vspace{-5pt}
\section{System Model and Problem Formulation}\label{sec:sys_model}

We consider a multi-pair wireless-powered relaying network, which consists of one energy-constrained DF relay $\R$ equipped with $N$ antennas and the set  $\mathcal{K}\triangleq\{1,\dots,K\}$ of $K=|\mathcal{K}|$ single-antenna user pairs, as illustrated in Fig. \ref{fig:systemodel}. In the $k$-th user pair, we assume that the source node $\Sk$ communicates with the destination node $\Dk$  via $\R$ and  there is no direct link between $\Sk$ and $\Dk$ due to path-loss and shadowing. The channels from $\Sk\rightarrow\R$ and $\R\rightarrow\Dk$ are denoted by $\bh_k\in\mathbb{C}^{N\times1}$ and $\bg_k\in\mathbb{C}^{1\times N}$, respectively, which 
are assumed to change block-by-block. The transmission block time, denoted by $T$, is divided into two phases: SWIPT phase $(\tau T)$ from $\{\Sk\}_{k\in\mathcal{K}}\rightarrow\R$ and wireless information transfer (WIT) phase $(1-\tau)T$ from $\R\rightarrow\{\Dk\}_{k\in\mathcal{K}}$, where $\tau\in(0,1)$ is a fraction of block time.

\vspace{-5pt}
\subsection{System Model}
\begin{figure}[t]
	\center
	\includegraphics[width=0.28\textwidth,trim={0.0cm 0.0cm 0cm -0.0cm}]{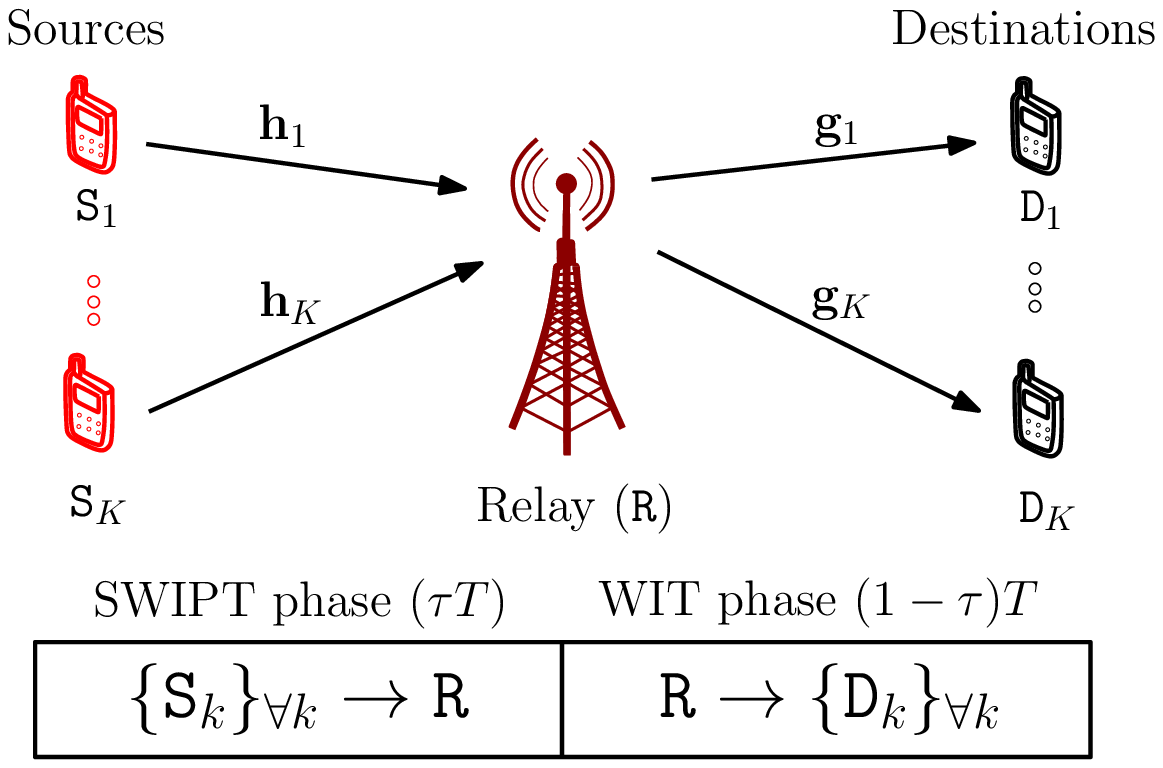}
	\caption{A multi-pair DF relaying network with SWIPT  and WIT  phases.}
	\label{fig:systemodel}
	\vspace{-10pt}
\end{figure}
\begin{figure}[t]
	\centering
	\includegraphics[width=0.28\textwidth,trim={0.0cm 0.0cm -0cm -0.0cm}]{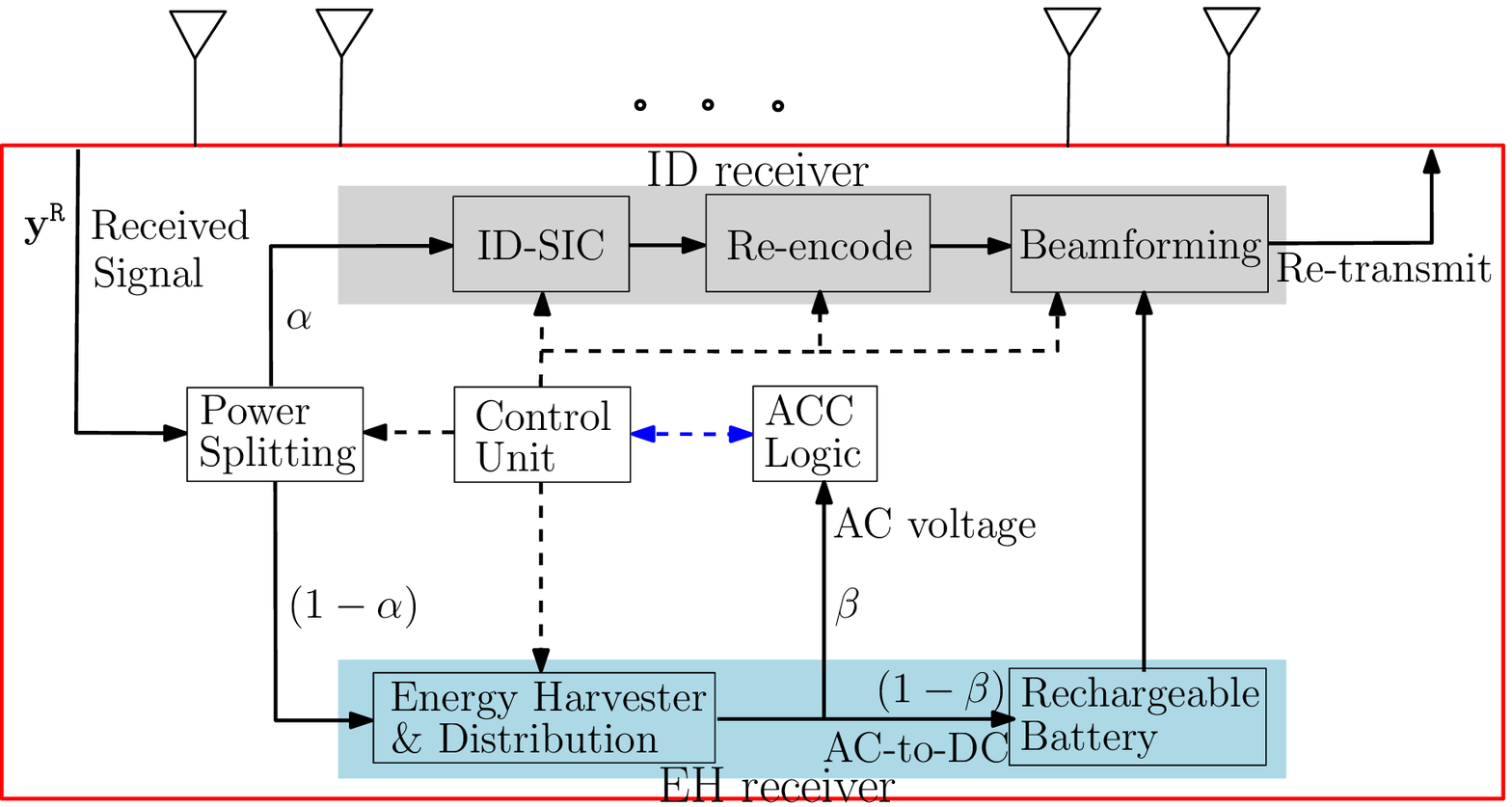}
	\caption{Proposed PSR architecture-enabled SIC and ACC at  relay.}
	\label{fig:relayoperation}
\end{figure}

\subsubsection{SWIPT Phase}
{\hili In Fig. \ref{fig:relayoperation}, we propose a new PSR architecture which enables the SIC technique and ACC logic at the ID and EH receivers, respectively.  In particular,  the received RF signal
	at $\R$ is split into two parts: ID and EH signals. In the EH receiver, the energy harvesting and  distribution blocks
	split the harvested AC power into two
	flows: one to directly supply the wirelessly harvested AC
	power for the ACC logic without rectification and regulation, while other to charge the battery for transmitting
	signals in the WIT phase by using the AC-to-DC rectifier. Note that the use of ACC logic eliminates the EH conversion loss.} Let $\alpha\in(0,1)$ be a portion of the RF  signal $\mathbf{y}_\mathtt{R}$ received at the relay using  a power splitter. The ID and EH signals  can be expressed as:
\begin{IEEEeqnarray}{rCl}
	\mathbf{y}_{\mathtt{R}}^{\ID} = \sqrt{\alpha}\mathbf{y}_\mathtt{R} + \mathbf{n}_{\mathtt{R}}\quad \text{and}\quad \mathbf{y}_{\mathtt{R}}^{\EH} = \sqrt{1-\alpha}\mathbf{y}_\mathtt{R},
	\label{eq:yRL}
\end{IEEEeqnarray}
where $\mathbf{y}_\mathtt{R} = \sum_{k\in \K} p_k\bh_k s_k + \mathbf{n}_{\mathtt{Ant}}$. Here $p_k$ and $s_k$ with $\mathbb{E}\{|s_k|^2\} = 1$ are the transmit power coefficient and the transmitted symbol at $\Sk$, respectively;  $\mathbf{n}_{\mathtt{Ant}}\sim \mathcal{CN}(0,\sigma^2_{\mathtt{Ant}}\bI)$ and $\mathbf{n}_{\mathtt{R}}\sim \mathcal{CN}(0,\sigma^2_{\mathtt{R}}\bI)$  are  the antenna noise and additional circuit noise introduced by the ID receiver, which are modeled as additive white Gaussian noise (AWGN) \cite{ShiTWC2014}. Without loss of generality, we normalize $T$ to 1 and rearrange the users in the ascending order of their channel gains, i.e., $\|\bh_1\|^2_2\leq\cdots\leq\|\bh_k\|_2^2\cdots\leq\|\bh_K\|^2_2$. We adopt the minimum mean square error and SIC (MMSE-SIC) technique at the ID receiver to decode  signals from sources \cite{TsePramodBook05}. To enhance the user fairness, we assume that the decoding order of SIC is from $s_K$ to $s_1$. Hence, the data rate (measured in  nats/sec/Hz) in decoding $s_k$ at $\mathtt{R}$ is given as
\begin{IEEEeqnarray}{rCl}
R_{1,k}(\bp,\tau,\alpha) = \tau\ln\bigl(1 + 	\gamma_{1,k}(\bp,\alpha)\bigl),
	\label{eq:rateSkR}
\end{IEEEeqnarray}
where ${\gamma}_{1,k}( \mathbf{p},\alpha) =  p_k^2\bh_k^H\boldsymbol{\Phi}_k^{-1}\bh_k$
with $\boldsymbol{\Phi}_k \triangleq\sum\nolimits_{\ell=1}^{k-1}p_\ell^2\bh_\ell\bh_\ell^H  + \sigma_{\mathtt{Ant}}^2\mathbf{I} + \frac{\sigma^2_{\mathtt{R}}}{\alpha}\mathbf{I}$  and $\bp\triangleq\{p_k\}_{k\in\K}$.

Next, the EH signal is further split into two flows by the energy harvester and distribution block, which are $\sqrt{\beta}\mathbf{y}_{\mathtt{R}}^{\EH}$ with the fraction $\beta\in(0,1)$ to directly supply an AC voltage to the ACC and the remaining $\sqrt{1-\beta}\mathbf{y}_{\mathtt{R}}^{\EH}$ to be rectified to the DC power. The DC power is stored in a rechargeable battery to be used for the data transmission in the WIT phase. By the charge-recycling theory \cite{WanESL17,SalmanVLSI18}, the average harvested AC power supplying ACC  can be expressed as
\begin{IEEEeqnarray}{rCl}
	P_\R^\mathtt{ACC}(\bp,\tau, 1-\alpha,\beta) = \tau(1-\alpha)\beta\sum_{k\in \K} p_k^2\|\bh_k\|^2_2.
	\label{eq:PACC}
\end{IEEEeqnarray}
Considering a realistic nonlinear EH model \cite{XiongTWC17}, the average harvested DC power at the EH receiver can be calculated as
{\small\begin{IEEEeqnarray}{rCl}
	&&P_\R^\mathtt{DC}(\bp,\tau,1-\alpha,1-\beta) = \tau\mfrac{\bar{P}_\mathtt{EH}^{\max}}{1-\Omega} \nonumber\\
	&&\qquad\qquad\quad\times\Big(\mfrac{1}{1+\exp\bigl(-a(P_\mathtt{R}^{\mathtt{IN}}(\bp,1-\alpha,1-\beta)-b)\bigr)}-\Omega\Big),
	\label{eq:PDC}
\end{IEEEeqnarray}}
\hspace{-7pt} where $P_\mathtt{R}^{\mathtt{IN}}(\bp,1-\alpha,1-\beta) \triangleq (1-\alpha)(1-\beta)\sum\nolimits_{k\in \K} p_k^2\|\bh_k\|_2^2$ is the AC power at the input of the EH circuit, $\bar{P}_\mathtt{EH}^{\max}$ is the maximum harvested power,  the constants $a$ and $b$ specify the EH circuits, and $\Omega=\bigr(1+\exp(ab)\bigl)^{-1}$.

\subsubsection{WIT Phase} The relay re-encodes  signals and forwards them to the destinations using the harvested power in the SWIPT phase. The re-encoded signal of the $k$-th pair is linearly weighted with the beamformer  $\bw_k\in\mathbb{C}^{N\times 1}$ at the relay prior to being forwarded to $\Dk$.
 As a result, the data rate decoded by  $\Dk$ is given as
\begin{IEEEeqnarray}{rCl}
	R_{2,k}(\bw,1-\tau) = (1-\tau)\ln\bigl(1 + 	\gamma_{2,k}(\bw)\bigr),
	\label{eq:rateDk}
\end{IEEEeqnarray}
where $\gamma_{2,k}(\bw) \triangleq \frac{|\bg_k\bw_k|^2}{\sum\nolimits_{\ell\in \K \backslash k}|\bg_{k}\bw_{\ell}|^2 + \sigma_k^2}$, $\bw \triangleq \{\bw_k\}_{k\in\mathcal{K}}$ and $\sigma_k^2$ is the variance of the AWGN at $\Dk$.
\begin{remark}
	The use of power domain-based NOMA at destinations \cite{NguyenACCESS2019}
	is inefficient since the harvested energy at the relay is very limited. It is noted that in multi-user SWIPT-based relay
	networks, the better the EH performance is, the more severer the
	network interference at the relay is.
\end{remark}

\subsection{Optimization Problem Formulation}
The achievable e2e rate of the $k$-th  pair can be  defined as
\begin{IEEEeqnarray}{rCl}
	R_k = \min\big\{R_{1,k}(\bp,\tau,\alpha),R_{2,k}(\bw, 1-\tau)\big\},\quad \forall k\in\mathcal{K}.
	\label{eq:rate}
\end{IEEEeqnarray}
The average power consumed by the relay  can be expressed as
\begin{IEEEeqnarray}{rCl}
	{ P_\R^{\mathtt{tot}}(\bw,1-\tau) = (1-\tau)P_\R^\mathtt{BF}(\bw)   + P_\R^\mathtt{sta}, }
	\label{eq:Ptot1}
\end{IEEEeqnarray}
where $P_\R^\mathtt{BF}(\bw) \triangleq\sum\nolimits_{k\in \K}\|\bw_k\|^2_2$ and $P_\R^\mathtt{sta}$ are the radiated power in the WIT phase and  the static power consumed by the circuits at $\R$, respectively.

We can observe that all parameters are mutually dependent, and thus,  should be jointly optimized.
The optimization problem of maximizing the minimum e2e rate among all  user pairs can be mathematically expressed as
\begingroup
\allowdisplaybreaks\begin{subequations}\label{MMR}
	\begin{IEEEeqnarray}{lcll} 
		& \max \limits_{\bp, \bw,\tau,\alpha,\beta}&\quad & r_0\triangleq\min\limits_{k\in\mathcal{K}}\ \{R_k\}  \label{MMRa}\\
		&\st&& P_\R^{\mathtt{tot}}(\bw,1-\tau)  \leq P_\R^\mathtt{DC}(\bp,\tau,1-\alpha,1-\beta), \label{MMRb}\qquad \\
		&&&  P_\R^\mathtt{ACC}(\bp,\tau, 1-\alpha,\beta)  \geq P^\mathtt{ACC}_{\min}, \label{MMRc}\\
		&&&p_k^2 \leq P_{\Sk}^{\max},\quad \forall k\in\K \label{MMRd}\\
		&&&  \tau\in(0,1),\ \alpha\in(0,1),\ \beta\in(0,1).\label{MMRe}
	\end{IEEEeqnarray}
\end{subequations}\endgroup
Here constraint \eqref{MMRb} ensures that the power consumption cannot exceed the harvested power.  $P^\mathtt{ACC}_{\min}$ in \eqref{MMRc} is the minimum  AC power  required for the ACC, while  \eqref{MMRd} represents the transmit power constraint at the source nodes.
\begin{remark}
The optimization problem with the traditional DCC can also be formulated as
	{\small\begin{IEEEeqnarray}{lcll} \label{MMRDCC}
		&& \max \limits_{\bp, \bw,\tau,\alpha,\beta=0}\quad  \tilde{r}_0\triangleq\min\limits_{k\in\mathcal{K}}\ \{R_k\},\quad\st\ \eqref{MMRd}, \eqref{MMRe},  \nonumber\\
		&&\qquad \&\ P_\R^{\mathtt{tot}}(\bw,1-\tau) + P^\mathtt{DCC}_{\min} \leq P_\R^\mathtt{DC}(\bp,\tau,1-\alpha,1), \label{MMRDCCc}\qquad 
	\end{IEEEeqnarray}}
\end{remark}
where $\beta=0$ and $P^\mathtt{DCC}_{\min}$ is the minimum DC power required for DCC. In addition to no EH conversion loss by the use of ACC, the benefit of problem \eqref{MMR} over \eqref{MMRDCC} can be realized by the fact that  $P^\mathtt{DCC}_{\min} $ is much higher than $P^\mathtt{ACC}_{\min}$ \cite{SalmanVLSI18}.

\vspace{-5pt}
\section{Proposed Algorithm}\label{section_alg}

\subsection{Equivalent Formulation}
Problem \eqref{MMR} is a non-convex program due to the  non-concave and non-smooth objective \eqref{MMRa} and non-convex constraints \eqref{MMRb} and \eqref{MMRc}. {\hili A direct application of the proposed method in \cite{NguyenACCESS2019}  for solving \eqref{MMR} still involves a nonconvex problem due to strong coupling between optimization
variables, and thus, several preliminary steps are necessary.} For that, we first make change of variables as $\tau_1 = \tau^{-1}, \tau_2 = (1-\tau)^{-1}, \alpha_1 = \alpha^{-1},  \alpha_2 = (1-\alpha)^{-1}$, and introduce the slack variables $\psi_{1,k}> 0,{\psi}_{2,k}> 0,\forall k$ and $r\geq 0$ to  rewrite \eqref{MMR} into the following equivalent
problem:
	\begin{IEEEeqnarray}{rCl} \label{MMReqi}
		 &&\max_{\bp, \bw,\boldsymbol{\tau}, \boldsymbol{\alpha},\boldsymbol{\psi},\beta,r}\quad  r  \IEEEyessubnumber\label{MMReqi:a}\\
		&&\quad\st\ \tau_i^{-1}\ln\bigl(1+ \psi_{i,k}^{-1}\bigr)  \geq r,\ \forall i\in\mathcal{I}\triangleq\{1,2\}, k\in\mathcal{K},      \IEEEyessubnumber\label{MMReqi:b}\qquad\\
		&&\qquad\quad \gamma_{1,k}(\bp,\alpha_1^{-1}) \geq  \psi_{1,k}^{-1},\ \forall k\in\mathcal{K},      \IEEEyessubnumber\label{MMReqi:c}\\
		&&\qquad\quad \gamma_{2,k}(\bw)  \geq  \psi_{2,k}^{-1},\ \forall k\in\mathcal{K},      \IEEEyessubnumber\label{MMReqi:d}\\
		&&\qquad\quad P_\R^{\mathtt{tot}}(\bw,\tau_2^{-1})  \leq P_\R^\mathtt{DC}(\bp,\tau_1^{-1},\alpha_2^{-1},1-\beta), \IEEEyessubnumber\label{MMReqi:e}\qquad \\
		&&\qquad\quad  P_\R^\mathtt{ACC}(\bp,\tau_1^{-1}, \alpha_2^{-1},\beta)  \geq P^\mathtt{ACC}_{\min}, \IEEEyessubnumber\label{MMReqi:f}\\
		&&\qquad\quad \tau_1^{-1} + \tau_2^{-1} \leq 1, \tau_1 >1 ,\tau_2 >1,\IEEEyessubnumber\label{MMReqi:g}\\
	  &&\qquad\quad\alpha_1^{-1} + \alpha_2^{-1} \leq 1, \alpha_1 >1, \alpha_2 >1,\IEEEyessubnumber\label{MMReqi:h}\\
		&&\qquad\quad \beta\in(0,1),\ p_k^2 \leq P_{\Sk}^{\max},\quad \forall k\in\K, \IEEEyessubnumber\label{MMReqi:i}
	\end{IEEEeqnarray}
where $\boldsymbol{\tau}\triangleq\{\tau_i\}_{i\in\mathcal{I}}$, $\boldsymbol{\alpha}\triangleq\{\alpha_i\}_{i\in\mathcal{I}}$ and $\boldsymbol{\psi}\triangleq\{\psi_{i,k}\}_{i\in\mathcal{I},k\in\mathcal{K}}$.
We now provide the following lemma to characterize the key property of problem \eqref{MMReqi}.
\begin{lemma}\label{lemma:1}
The optimization problems \eqref{MMR} and \eqref{MMReqi} are equivalent, as they share the same optimal
solution set and  objective value (i.e., $r_0^\star = r^\star$).
\end{lemma}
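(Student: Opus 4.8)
The plan is to prove equivalence by establishing the two inequalities $r^\star\ge r_0^\star$ and $r_0^\star\ge r^\star$ on the optimal values, with the change of variables furnishing the correspondence between optimal solution sets. The natural starting point is the epigraph rewriting of the max--min objective: since $R_k=\min\{R_{1,k},R_{2,k}\}$, we have $r_0=\min_{i\in\{1,2\},k}R_{i,k}$, so maximizing $r_0$ is the same as maximizing a scalar $r$ subject to $R_{i,k}\ge r$ for all $i,k$. The slack variables $\psi_{1,k},\psi_{2,k}$ together with \eqref{MMReqi:b}--\eqref{MMReqi:d} encode precisely these bounds; combining \eqref{MMReqi:c}--\eqref{MMReqi:d} with the monotonicity of $\ln(1+\cdot)$ shows $\tau_i^{-1}\ln(1+\psi_{i,k}^{-1})\le\tau_i^{-1}\ln(1+\gamma_{i,k})$, so these two constraints may be taken tight at an optimum without loss.

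First I would prove $r^\star\ge r_0^\star$ by directly lifting an optimizer of \eqref{MMR}. Given optimal $(\bp,\bw,\tau,\alpha,\beta)$ with value $r_0^\star$, set $\tau_1=\tau^{-1}$, $\tau_2=(1-\tau)^{-1}$, $\alpha_1=\alpha^{-1}$, $\alpha_2=(1-\alpha)^{-1}$, $\psi_{1,k}=1/\gamma_{1,k}(\bp,\alpha)$, $\psi_{2,k}=1/\gamma_{2,k}(\bw)$, and $r=r_0^\star$. Then $\tau_1^{-1}+\tau_2^{-1}=1$ and $\alpha_1^{-1}+\alpha_2^{-1}=1$ hold with equality (so \eqref{MMReqi:g}--\eqref{MMReqi:h} are met with $\tau_1,\tau_2,\alpha_1,\alpha_2>1$), constraints \eqref{MMReqi:c}--\eqref{MMReqi:d} hold with equality, \eqref{MMReqi:e}--\eqref{MMReqi:f} coincide with \eqref{MMRb}--\eqref{MMRc}, and \eqref{MMReqi:b} reduces to $R_{i,k}\ge r_0^\star$, true by definition of $r_0^\star$. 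Thus the lifted point is feasible for \eqref{MMReqi} with objective $r_0^\star$.

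The converse $r_0^\star\ge r^\star$ is the crux, since in \eqref{MMReqi} the pairs $(\tau_1,\tau_2)$ and $(\alpha_1,\alpha_2)$ are decoupled and the couplings \eqref{MMReqi:g}--\eqref{MMReqi:h} are mere inequalities, so one cannot simply invert the substitution. My idea is to exploit that every quantity in \eqref{MMReqi} is monotone in the split fractions in a consistent direction. From an optimizer of \eqref{MMReqi} I would recover an original point by setting $\tau:=1-\tau_2^{-1}$ and $\alpha:=\alpha_1^{-1}$, leaving $\bp,\bw,\beta$ fixed. By \eqref{MMReqi:g}--\eqref{MMReqi:h} this gives $\tau,\alpha\in(0,1)$ with $\tau\ge\tau_1^{-1}$, $1-\tau=\tau_2^{-1}$, $\alpha=\alpha_1^{-1}$, and $1-\alpha\ge\alpha_2^{-1}$. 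Since $R_{1,k},P_\R^\mathtt{DC},P_\R^\mathtt{ACC}$ increase in $\tau$, $\gamma_{1,k}$ increases in $\alpha$, and $P_\R^\mathtt{DC},P_\R^\mathtt{ACC}$ increase in $1-\alpha$, these inequalities only relax the harvested-power constraints: the recovered $P_\R^\mathtt{DC},P_\R^\mathtt{ACC}$ dominate their \eqref{MMReqi} counterparts, so \eqref{MMRb}--\eqref{MMRc} follow from \eqref{MMReqi:e}--\eqref{MMReqi:f}, while $P_\R^{\mathtt{tot}}(\bw,1-\tau)$ is unchanged because $1-\tau=\tau_2^{-1}$. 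Chaining \eqref{MMReqi:b} with \eqref{MMReqi:c}--\eqref{MMReqi:d} then yields $R_{1,k}\ge\tau_1^{-1}\ln(1+\psi_{1,k}^{-1})\ge r^\star$ and $R_{2,k}=\tau_2^{-1}\ln(1+\gamma_{2,k})\ge r^\star$, hence $R_k\ge r^\star$ for every $k$ and $r_0\ge r^\star$.

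The main obstacle is exactly this converse step: I must show the relaxation of the two equalities to \eqref{MMReqi:g}--\eqref{MMReqi:h} cannot enlarge the attainable objective. The delicate point is that $\tau$ enters $R_{1,k}$ and $R_{2,k}$ in opposite senses, so the naive choice $\tau=\tau_1^{-1}$ preserves $R_{1,k}$ but may violate the power-balance constraint \eqref{MMRb} whenever the sum constraint is slack; making the binding side $1-\tau=\tau_2^{-1}$ exact and letting the monotone slack fall entirely on the harvested-power side is what forces all original constraints to hold at once. Once both inequalities are secured, $r_0^\star=r^\star$, and the explicit forward and backward maps identify the optimal solution sets of \eqref{MMR} and \eqref{MMReqi}.
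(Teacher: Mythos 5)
Your proof is correct, but it proceeds differently from the paper's. The paper's appendix first asserts (without proof) that \eqref{MMReqi:g} and \eqref{MMReqi:h} must be active at optimum, and then shows by a perturbation/contradiction argument that \eqref{MMReqi:b}--\eqref{MMReqi:d} are active as well: if they were slack one could shrink $\psi_{i,k}$ and raise $r$ by some $\Delta r>0$, contradicting optimality. Once all the introduced inequalities are tight, \eqref{MMReqi} collapses back onto \eqref{MMR}. You instead prove the two inequalities $r^\star\ge r_0^\star$ and $r_0^\star\ge r^\star$ by exhibiting explicit feasibility maps in each direction, and you never need the constraints to be tight: the possible slack in \eqref{MMReqi:g}--\eqref{MMReqi:h} is absorbed by the monotonicity of $R_{1,k}$, $\gamma_{1,k}$, $P_\R^\mathtt{DC}$ and $P_\R^\mathtt{ACC}$ in $\tau$, $\alpha$ and $1-\alpha$, provided one recovers the original variables via $1-\tau=\tau_2^{-1}$ and $\alpha=\alpha_1^{-1}$ so that the binding quantities $P_\R^{\mathtt{tot}}(\bw,1-\tau)$ and $\gamma_{1,k}(\bp,\alpha)$ are preserved exactly while the slack falls on the harvested-power side. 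Your identification of this asymmetry (why $\tau=\tau_1^{-1}$ would be the wrong choice) is a genuine point the paper glosses over. What the paper's argument buys is brevity and the structural fact---activeness of the constraints at optimum---that it implicitly reuses when initializing and approximating \eqref{MMReqi}; what yours buys is a self-contained and fully rigorous equivalence that also covers optimal solutions of \eqref{MMReqi} at which \eqref{MMReqi:g}--\eqref{MMReqi:h} happen to be slack, a case the paper dismisses by assertion.
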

\begin{proof} See Appendix.\end{proof}

\subsection{IA-based Iterative Algorithm}
In problem \eqref{MMReqi},  the non-convex constraints include \eqref{MMReqi:b}-\eqref{MMReqi:f}. Let us handle the non-convex constraint \eqref{MMReqi:b} first. We can see that the function $f(\psi_{i,k},\tau_i)\triangleq\tau_i^{-1}\ln\bigl(1+ \psi_{i,k}^{-1}\bigr)$ is convex on the domain $(\tau_i >1, \psi_{i,k} > 0)$, which is useful to develop an approximate solution by the IA method. At iteration $\kappa$ of an iterative algorithm presented shortly, \eqref{MMReqi:b} is innerly approximated   as
\begin{IEEEeqnarray}{rCl}\label{MMReqi:bconvex}
f^{(\kappa)}(\psi_{i,k},\tau_i) \triangleq A^{(\kappa)} + B^{(\kappa)}\psi_{i,k} + C^{(\kappa)}\tau_i \geq r,\ \forall i,k,\quad
\end{IEEEeqnarray} 
where $A^{(\kappa)} \triangleq 2\ln\big(1+1/\psi_{i,k}^{(\kappa)}\big)/\tau_i^{(\kappa)} + 1/(\psi_{i,k}^{(\kappa)}+1)\tau_i^{(\kappa)}$, $B^{(\kappa)} \triangleq -1/\psi_{i,k}^{(\kappa)}(\psi_{i,k}^{(\kappa)}+1)\tau_i^{(\kappa)}$ and $C^{(\kappa)}\triangleq -\ln\big(1+1/\psi_{i,k}^{(\kappa)}\big)/(\tau_i^{(\kappa)})^2$  are constant. Note that $f^{(\kappa)}(\psi_{i,k},\tau_i)$ in \eqref{MMReqi:bconvex} is concave and represents a global lower bound of $f(\psi_{i,k},\tau_i)$ at the feasible point $(\psi_{i,k}^{(\kappa)},\tau_i^{(\kappa)})$ \cite[Appendix A]{NguyenJSAC18}, satisfying $f^{(\kappa)}(\psi_{i,k}^{(\kappa)},\tau_i^{(\kappa)})=f(\psi_{i,k}^{(\kappa)},\tau_i^{(\kappa)})$.

Next, we tackle the non-convexity of \eqref{MMReqi:c} and \eqref{MMReqi:d}. For \eqref{MMReqi:c}, we first consider the function $h(p_k,\bar{\boldsymbol{\Phi}}_k)\triangleq p_k^2\mathbf{h}_k^H\bar{\boldsymbol{\Phi}}_k^{-1}\mathbf{h}_k$ with $p_k>0$ and $\boldsymbol{\bar{\Phi}}_k\triangleq\sum\nolimits_{\ell=1}^{k-1}p_\ell^2\bh_\ell\bh_\ell^H  + \sigma_{\mathtt{Ant}}^2\mathbf{I} + \alpha_1\sigma^2_{\mathtt{R}}\mathbf{I}\succ\mathbf{0}$. A concave approximate function of $h(p_k,\bar{\boldsymbol{\Phi}}_k)$ is given as
\begin{IEEEeqnarray}{rCl}\label{eq:approhfunctio}
h(p_k,\bar{\boldsymbol{\Phi}}_k) \geq h^{(\kappa)}(p_k,\bar{\boldsymbol{\Phi}}_k)\triangleq 2p_k^{(\kappa)}\mathbf{h}_k^H(\bar{\boldsymbol{\Phi}}_k^{(\kappa)})^{-1}\mathbf{h}_kp_k \nonumber\\
- (p_k^{(\kappa)})^2\mathbf{h}_k^H(\bar{\boldsymbol{\Phi}}_k^{(\kappa)})^{-1}\bar{\boldsymbol{\Phi}}_k(\bar{\boldsymbol{\Phi}}_k^{(\kappa)})^{-1}\mathbf{h}_k,\quad
\end{IEEEeqnarray} 
where $\boldsymbol{\bar{\Phi}}_k^{(\kappa)}\triangleq\sum\nolimits_{\ell=1}^{k-1}(p_\ell^{(\kappa)})^2\bh_\ell\bh_\ell^H  + \sigma_{\mathtt{Ant}}^2\mathbf{I} + \alpha_1^{(\kappa)}\sigma^2_{\mathtt{R}}\mathbf{I}$. The proof is done by the fact that $h(p_k,\bar{\boldsymbol{\Phi}}_k)$ is a convex function in $(p_k,\bar{\boldsymbol{\Phi}}_k)$ \cite[Eq. (31)]{Dinh:JSAC:Dec2017}. Therefore, we can iteratively replace \eqref{MMReqi:c}   with the following convex constraint:
\begin{IEEEeqnarray}{rCl}\label{eq:approhfunction}
h^{(\kappa)}(p_k,\bar{\boldsymbol{\Phi}}_k) \geq 1/\psi_{1,k},\ \forall k\in\mathcal{K}.
\end{IEEEeqnarray}
We  rewrite \eqref{MMReqi:d} as $(\Re\{\bg_k\bw_k\})^2 \geq \frac{\sum_{\ell\in \K \backslash k}|\bg_{k}\bw_{\ell}|^2 + \sigma_k^2}{\psi_{2,k}}$, which can be convexified as
\begin{IEEEeqnarray}{lll}
	g^{(\kappa)}(\bw_k) \geq \frac{\sum_{\ell\in \K \backslash k}|\bg_{k}\bw_{\ell}|^2 + \sigma_k^2}{\psi_{2,k}},\ \forall k\in\mathcal{K},
	\label{std1}
	\end{IEEEeqnarray} 
upon the condition
\begin{IEEEeqnarray}{lll}
	\Re\{\bg_k^H\bw_k\} \geq 0,\ \forall k\in\mathcal{K},
	\label{std2}
	\end{IEEEeqnarray} 
where $g^{(\kappa)}(\bw_k)\triangleq 2\Re\{\bg_k\bwi_k\}\Re\{\bg_k\bw_k\}	- (\Re\{\bg_k\bwi_k\})^2$ is the concave approximation of $(\Re\{\bg_k\bw_k\})^2$ at $\bwi_k$.

We are now in a position to approximate \eqref{MMReqi:e} and \eqref{MMReqi:f}. It is true that 
\begin{subnumcases}{\label{MMReqi:e1} \eqref{MMReqi:e} \Leftrightarrow}
	a\sum_{k\in \K} \frac{p_k^2\|\bh_k\|_2^2}{\alpha_2} + \ln\bigl(\frac{1}{\vartheta}\bigr)\frac{1}{1-\beta} \geq \frac{ab}{1-\beta},\IEEEyessubnumber\label{MMReqi:e1a}\qquad\\
	 (\tau_2-1)\vartheta \geq (1+\vartheta)\theta,\IEEEyessubnumber\label{MMReqi:e1b}\\
	 \xi\theta -\xi\Omega(\tau_2-1) -  \tau_2 P_{\R}^{\mathtt{sta}} \geq P_{\R}^{\mathtt{BF}}(\bw),\quad\IEEEyessubnumber\label{MMReqi:e1c}
\end{subnumcases}
where  $\xi \triangleq \frac{\bar{P}_\mathtt{EH}^{\max}}{1-\Omega}$, and $\vartheta$ and $\theta$ are   slack variables. Constraint \eqref{MMReqi:e1c} is convex, while \eqref{MMReqi:e1a} and \eqref{MMReqi:e1b} still remain  non-convex. By applying the first-order Taylor series approximation to the non-convex parts of \eqref{MMReqi:e1a}  
and an approximation of bilinear function $(1+\vartheta)\theta$ of \eqref{MMReqi:e1b} \cite{Beck:JGO:10}, it follows that
\begin{IEEEeqnarray}{rCl}\label{MMReqi:e2} 
  a\mathcal{H}^{(\kappa)}(\bp,\alpha_2) + \tilde{f}^{(\kappa)}(\vartheta,\beta) &\geq&  \frac{ab}{1-\beta},\IEEEyessubnumber\label{MMReqi:e2a}\qquad\\
(\tau_2-1)\vartheta &\geq& \mathcal{B}^{(\kappa)}(1+\vartheta,\theta),\IEEEyessubnumber\label{MMReqi:e2b}
\end{IEEEeqnarray}
where $\mathcal{H}^{(\kappa)}(\bp,\alpha_2)\triangleq \sum_{k\in \K} \Big( \frac{2\pki\|\bh_k\|_2^2}{\alphadi}p_k - \frac{(\pki)^2\|\bh_k\|_2^2}{(\alphadi)^2}\alpha_2\Big)$,  $\tilde{f}^{(\kappa)}(\vartheta,\beta)\triangleq 2\ln\big(\mfrac{1}{\varthetai}\big)\mfrac{1}{1-\betai} - \mfrac{\vartheta}{\varthetai(1-\betai)} +\mfrac{1}{1-\betai} - \ln\big(\mfrac{1}{\varthetai}\big)\mfrac{1-\beta}{(1-\betai)^2}$ and $\mathcal{B}^{(\kappa)}(1+\vartheta,\theta)\triangleq 0.5\bigl(\frac{1+\vartheta^{(\kappa)}}{\theta^{(\kappa)}}\theta^2 + \frac{\theta^{(\kappa)}}{1+\vartheta^{(\kappa)}}(1+\vartheta)^2\bigr)$. Finally, we can transform \eqref{MMReqi:f} into  $\sum_{k\in \K} \frac{p_k^2\|\bh_k\|_2^2}{\alpha_2} \geq P^\mathtt{ACC}_{\min}\frac{\tau_1}{\beta}$, which is innerly approximated as
\begin{IEEEeqnarray}{rCl}\label{MMReqi:fconvex} 
  \mathcal{H}^{(\kappa)}(\bp,\alpha_2) \geq P^\mathtt{ACC}_{\min} \mathcal{B}^{(\kappa)}(\tau_1,1/\beta),
\end{IEEEeqnarray}
by following the same procedures as in \eqref{MMReqi:e2}. 

Summing up,  at iteration $\kappa+1$, we solve the following approximate convex program:
\begin{IEEEeqnarray}{rCl} \label{MMRapproconvex}
		 &&\max_{\mathbf{s},r}\quad  r  \IEEEyessubnumber\label{MMRapproconvex:a}\\
		&&\st\ \eqref{MMReqi:g}-\eqref{MMReqi:i},\eqref{MMReqi:bconvex},\eqref{eq:approhfunction},\eqref{std1},\eqref{std2},\eqref{MMReqi:e1c},\eqref{MMReqi:e2},\eqref{MMReqi:fconvex},\qquad\IEEEyessubnumber\label{MMRapproconvex:b}
	\end{IEEEeqnarray}
where $\mathbf{s}\triangleq\{\bp, \bw,\boldsymbol{\tau}, \boldsymbol{\alpha},\boldsymbol{\psi},\beta,\vartheta,\theta\}$ denotes the set of variables that needs to be updated in the next iteration. We can see that the main barrier in finding an initial feasible point to start the computational procedure for  \eqref{MMReqi} is due to constraint \eqref{MMReqi:f}. Therefore, we successively solve the following modified convex program of \eqref{MMRapproconvex}:
\begin{IEEEeqnarray}{rCl} \label{MMRapproconvexIN}
		 &&\max_{\mathbf{s},r}\quad  \eta \triangleq \mathcal{H}^{(\kappa)}(\bp,\alpha_2) - P^\mathtt{ACC}_{\min} \mathcal{B}^{(\kappa)}(\tau_1,1/\beta) \IEEEyessubnumber\label{MMRapproconvexIN:a}\\
		&&\st\ \eqref{MMReqi:g}-\eqref{MMReqi:i},\eqref{MMReqi:bconvex},\eqref{eq:approhfunction},\eqref{std1},\eqref{std2},\eqref{MMReqi:e1c},\eqref{MMReqi:e2},\qquad\IEEEyessubnumber\label{MMRapproconvexIN:b}
	\end{IEEEeqnarray}
until reaching $\eta \geq 0$. To efficiently solve \eqref{MMRapproconvexIN}, we first set $p_k^{(0)} = \sqrt{P_{\Sk}^{\max}},\forall k,\tau_1^{(0)}=\tau_2^{(0)}=\alpha_1^{(0)}=\alpha_2^{(0)}=2, \beta=0.5$, and randomly generate a sufficiently small value of $\bw^{(0)}$ to ensure that \eqref{MMReqi:e} is feasible. Other initial points can be found as $\psi_{1,k}^{(0)}=1/\gamma_{1,k}(\bp^{(0)},1/\alpha_1^{(0)}), \psi_{2,k}^{(0)}=1/\gamma_{2,k}(\bw^{(0)})$,  $\vartheta^{(0)}=1/\exp\bigl(-a(P_\mathtt{R}^{\mathtt{IN}}(\bp^{(0)},1/\alpha_2^{(0)},1-\beta^{(0)})-b)\bigr)$, and $\theta^{(0)} = (\tau_2^{(0)}-1)\vartheta^{(0)}/(1+\vartheta^{(0)})$ by setting inequalities \eqref{MMReqi:c}, \eqref{MMReqi:d}, \eqref{MMReqi:e1a} and \eqref{MMReqi:e1b} to equalities, respectively. The proposed iterative algorithm is summarized in \textbf{Algorithm \ref{alg_1}}.
\begin{algorithm}[t]
	\begin{algorithmic}[1]{ 
		\protect\caption{Proposed Iterative Algorithm for Solving  \eqref{MMR}}
		\label{alg_1}
		\global\long\def\algorithmicrequire{\textbf{Initialization:}}
		\REQUIRE  Set $\kappa:=0, \kappa':=0$ and randomly generate  $\mathbf{s}^{(0)}$.\\
		\global\long\def\algorithmicrequire{\textbf{Generating a feasible point for \eqref{MMReqi}}: }
		\REQUIRE
		\REPEAT
		\STATE Solve  \eqref{MMRapproconvexIN} to obtain the optimal solution $\mathbf{s}^\star$.
		\STATE Update $\mathbf{s}^{(\kappa'+1)}:=\mathbf{s}^\star$ and set $\kappa':=\kappa'+1.$
		\UNTIL $\eta \geq 0$\\
		\STATE Set $\mathbf{s}^{(0)}:=\mathbf{s}^{(\kappa')}$.\\
		\global\long\def\algorithmicrequire{\textbf{Solving \eqref{MMReqi}}: }
		\REQUIRE
    \REPEAT
		\STATE Solve \eqref{MMRapproconvex} to obtain the optimal solution $\mathbf{s}^\star.$
		\STATE Update $\mathbf{s}^{(\kappa+1)}:=\mathbf{s}^{\star}$and set $\kappa:=\kappa+1.$
		\UNTIL Convergence \\ 
		\STATE \textbf{Output:} $(\mathbf{p},\bw,\tau,\alpha,\beta):=\bigl(\mathbf{p}^{(\kappa)},\bw^{(\kappa)},\frac{1}{\tau_1^{(\kappa)}},\frac{1}{\alpha_1^{(\kappa)}},\beta^{(\kappa)}\bigl)$.}
\end{algorithmic} \end{algorithm}

\begin{remark}
We note that problem \eqref{MMRapproconvex} can be transformed into an SOCP, where modern convex solvers are very efficient. The key is to further transform constraints \eqref{MMReqi:g}, \eqref{MMReqi:h}, \eqref{eq:approhfunction} and \eqref{MMReqi:fconvex} into SOC ones:
\begin{subnumcases}{}
\eqref{MMReqi:g} \Leftrightarrow \tilde{\tau}_1 + \tilde{\tau}_2 \leq 1,\ \&\ \tilde{\tau}_1\tau_1 \geq 1,\ \&\  \tilde{\tau}_2\tau_2 \geq 1,\nonumber\\
\eqref{MMReqi:h} \Leftrightarrow \tilde{\alpha}_1 + \tilde{\alpha}_2 \leq 1,\ \&\ \tilde{\alpha}_1\alpha_1 \geq 1,\ \&\  \tilde{\alpha}_2\alpha_2 \geq 1,\nonumber\\
	\eqref{eq:approhfunction} \Leftrightarrow h^{(\kappa)}(p_k,\bar{\boldsymbol{\Phi}}_k) \geq \tilde{\psi}_{1,k},\ \&\ \psi_{1,k}\tilde{\psi}_{1,k} \geq 1, \forall k\in\mathcal{K},\qquad\nonumber\\
	\eqref{MMReqi:fconvex}\Leftrightarrow\mathcal{H}^{(\kappa)}(\bp,\alpha_2) \geq P^\mathtt{ACC}_{\min} \mathcal{B}^{(\kappa)}(\tau_1,\tilde{\beta}),\ \&\ \beta\tilde{\beta} \geq 1, \nonumber
\end{subnumcases}
where $\tilde{\tau}_i,\tilde{\alpha}_i, \forall i\in\mathcal{I}, \tilde{\psi}_{1,k},\forall k\in\mathcal{K}$ and $\tilde{\beta}$ are slack variables.
\end{remark}

\textit{ Convergence and Complexity Analysis}: We can see that all the convex approximations in \eqref{MMRapproconvex} satisfy the IA properties listed in \cite{Beck:JGO:10}. In other words, the optimal solution obtained at iteration $\kappa$ of \textbf{Algorithm \ref{alg_1}} is also feasible for problem \eqref{MMRapproconvex} at iteration $\kappa + 1$.  It implies that \textbf{Algorithm \ref{alg_1}} produces a sequence $\mathbf{s}^{(\kappa)}$ of improved
points of \eqref{MMR}, which converges to at least  a local optimum.
Problem \eqref{MMRapproconvex} involves $(6K + 7)$ conic constraints and $(KN + 3K +8)$ scalar decision variables. Thus,  the worst-case computational complexity in each iteration of \textbf{Algorithm \ref{alg_1}} is $\mathcal{O}\bigl((6K)^{0.5}(KN + 3K)^3\bigr)$. Similarly, the complexity of  \eqref{MMRapproconvexIN} for finding an initial feasible point is  $\mathcal{O}\bigl((6K)^{0.5}(KN + 3K)^3\bigr)$.

{\hili \begin{remark}
	The channel state information (CSI) between the relay and sources/destinations, as well as the controlling signal, can be exchanged via dedicated channels, and thus the algorithm is simply executed at the relay.  Moreover, Algorithm \ref{alg_1} can be slightly modified to solve the worst-case robust optimization problem, where the bounded CSI error is taken into account.
\end{remark}}

\section{Numerical Results}
We consider the relay network as shown  in Fig. \ref{fig:systemodel}, in which the distances from  each source to relay and from the relay to each destination are set to be  10 m and 15 m, respectively. The networks parameters are set as $K=4$, $\sigma_{\mathtt{Ant}}^2=\sigma_k^2=-70$ dBm, $\sigma^2_{\mathtt{R}}=-50$ dBm, $P_\R^\mathtt{sta}=1$ $\mu$W, {$P^\mathtt{ACC}_{\min}= 0.27$ $\mu$W and $P^\mathtt{DCC}_{\min}=47.64$ $\mu$W} \cite{SalmanVLSI18,ShiTWC2014}.  The parameters of nonlinear EH model are $\bar{P}^{\max}_\mathtt{EH}=0.2$ mW, $a$ = 6400 and $b = 0.003$ \cite{XiongTWC17}. All channels are assumed to undergo Rayleigh fading with the path-loss exponent of 3.5.  All source nodes are assumed to have the same power budget, i.e., $P_{\mathtt{S}}^{\max}=P_{\Sk}^{\max},\forall k$. We use the YALMIP toolbox with the SeDuMi solver to solve the convex problems.
For benchmarking purpose, we compare the performance of Algorithm \ref{alg_1} with the use of DCC in \eqref{MMRDCC} and four other suboptimal schemes:
($i$) 	 ``Equal Block Time (EBT)" with $\tau = 0.5$;
($ii$) 	 ``Equal Power Splitting (EPS)" with $\alpha = 0.5$;
($iii$)  ``Equal Block Time and Equal Power Splitting (EBT-EPS)" with $\tau=\alpha=0.5$; and
	($iv$)  ``Non-SIC" without using the SIC at the ID receiver.

\begin{figure}[!t]
	\centering
		\begin{subfigure}[$N = 4$.]{
			\includegraphics[width=0.21\textwidth]{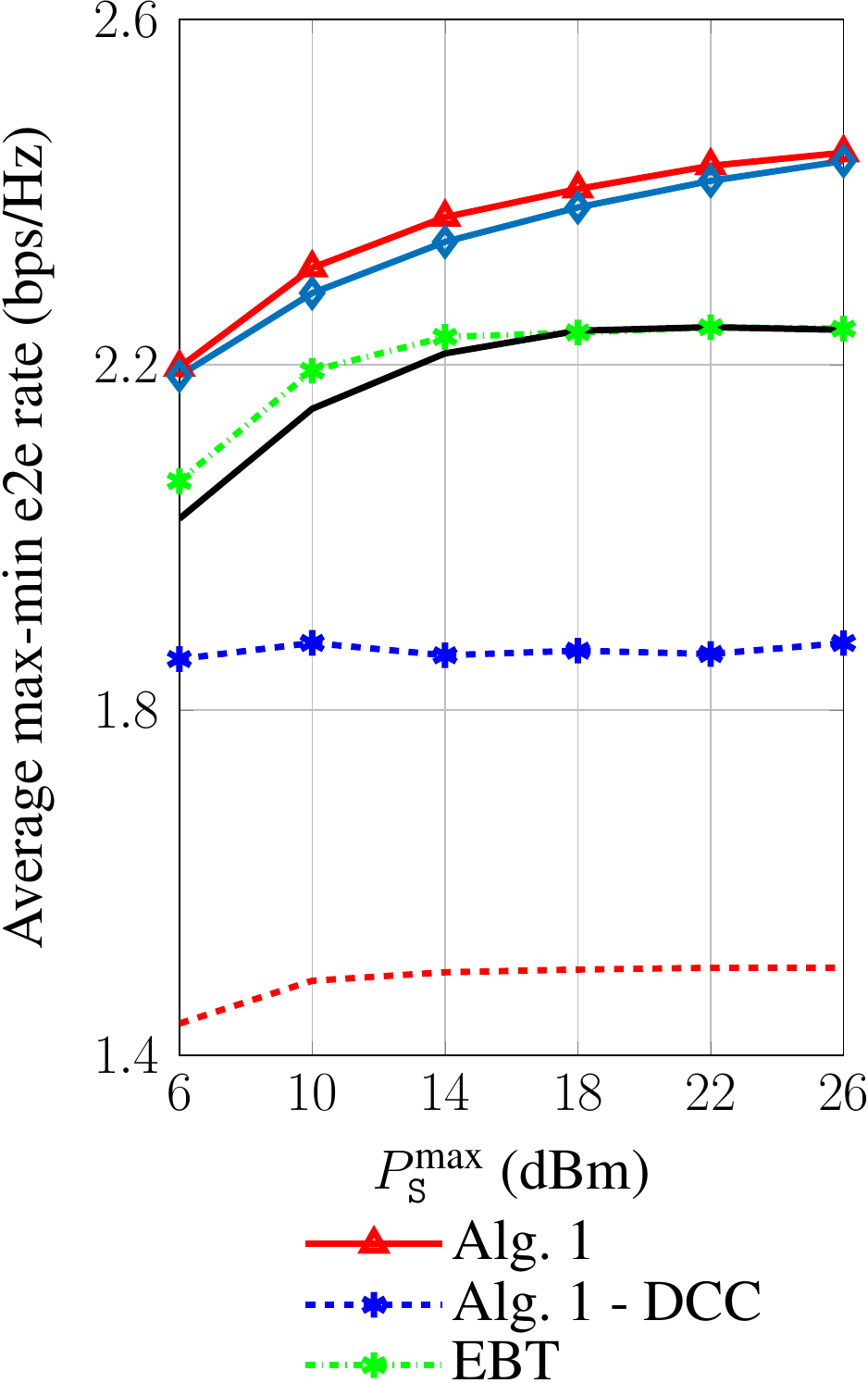}
			\label{fig: power}}
		\end{subfigure}
	\hfill
		\begin{subfigure}[$P^\text{max}_\mathtt{S} = 18$ dBm.]{
			\includegraphics[width=0.20\textwidth]{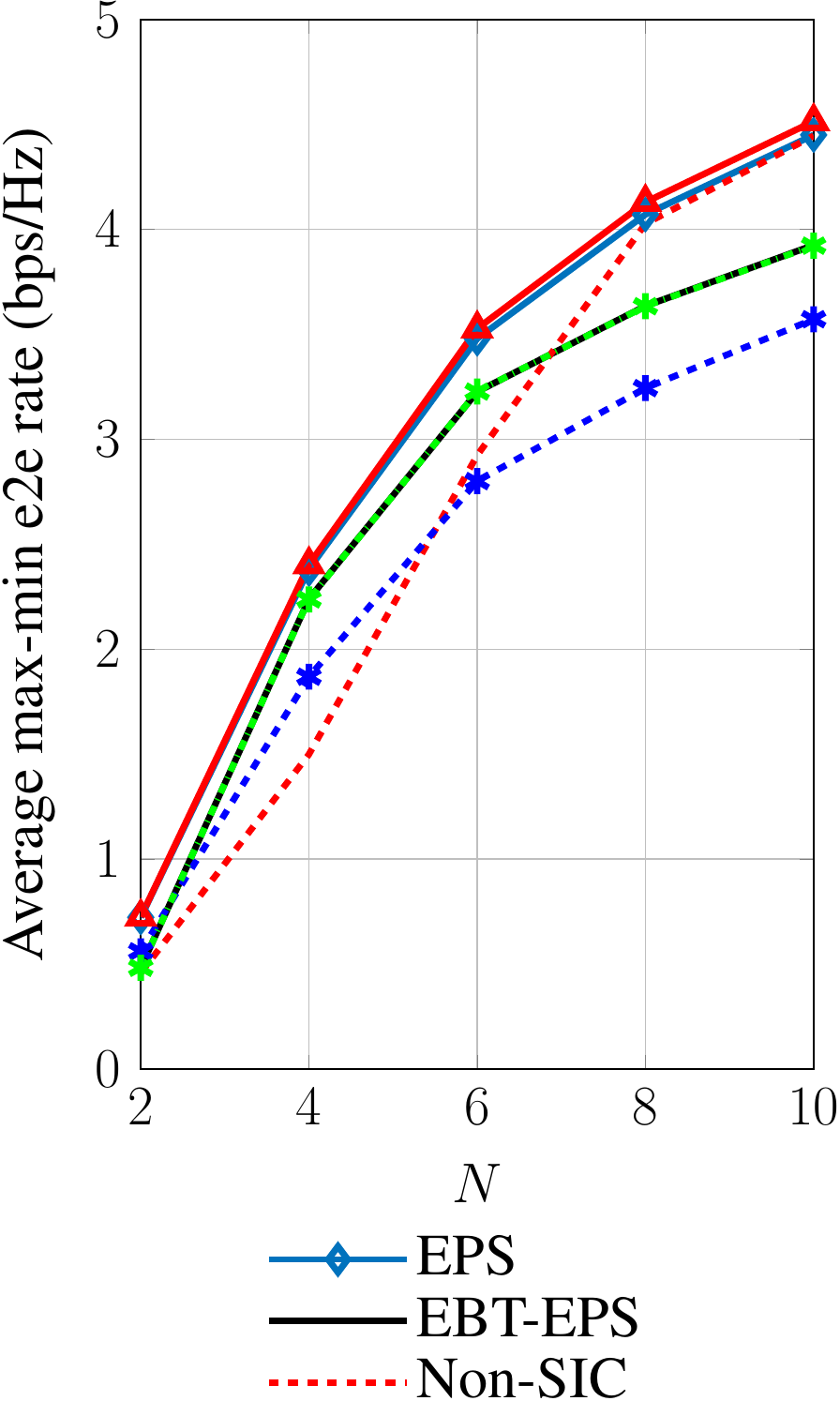}
			\label{fig: antenna}}
		\end{subfigure}
		\caption{Average max-min e2e rate versus  $P_{\mathtt{S}}^{\max}$ and $N$.}	
		\label{fig: power and antenna}
\end{figure}

We plot the average max-min e2e rate versus  $P_{\mathtt{S}}^{\max}$ and $N$ in Figs. \ref{fig: power} and \ref{fig: antenna}, respectively. As can be observed,  the proposed Algorithm \ref{alg_1} indeed shows better performance compared to the others in all cases. The results also confirm that significant performance gain can be achieved by jointly optimizing involved parameters, compared to the EBT, EPS and EBT-EPS schemes. In addition, the performance
gaps between Algorithm \ref{alg_1} and EPS, and between EBT and EBT-EPS  are not significant for high values of $P_{\mathtt{S}}^{\max}$ and $N$,  implying that $\alpha=0.5$ is a near-optimal solution. Moreover, the use of ACC shows its effectiveness since Algorithm \ref{alg_1} can achieve superior performance compared to the use of DCC. This is attributed to the fact that there is no EH conversion loss and $P^\mathtt{DCC}_{\min} \gg P^\mathtt{ACC}_{\min}$. In Fig. \ref{fig: power}, the non-SIC scheme provides the worst performance due to  severe interference at the relay, thus reaching a saturated value quickly when $P_{\mathtt{S}}^{\max} \geq 10$ dBm. However, the performance of the non-SIC approach catches
up with that of Algorithm \ref{alg_1} in Fig. \ref{fig: antenna}, as $N$ increases. The reason is that a relay with more antennas is able to combat the interference more effectively. These observations further validate the benefits of the proposed  PSR architecture-enabled SIC and ACC at the relay.

\begin{figure}[!t] 
	\centering
	\begin{subfigure}[Convergence of Alg. \ref{alg_1} with one random channel realization.]
		{
			\includegraphics[width=.27\textwidth, trim={0cm, 0cm, -0.0cm, 0cm}]{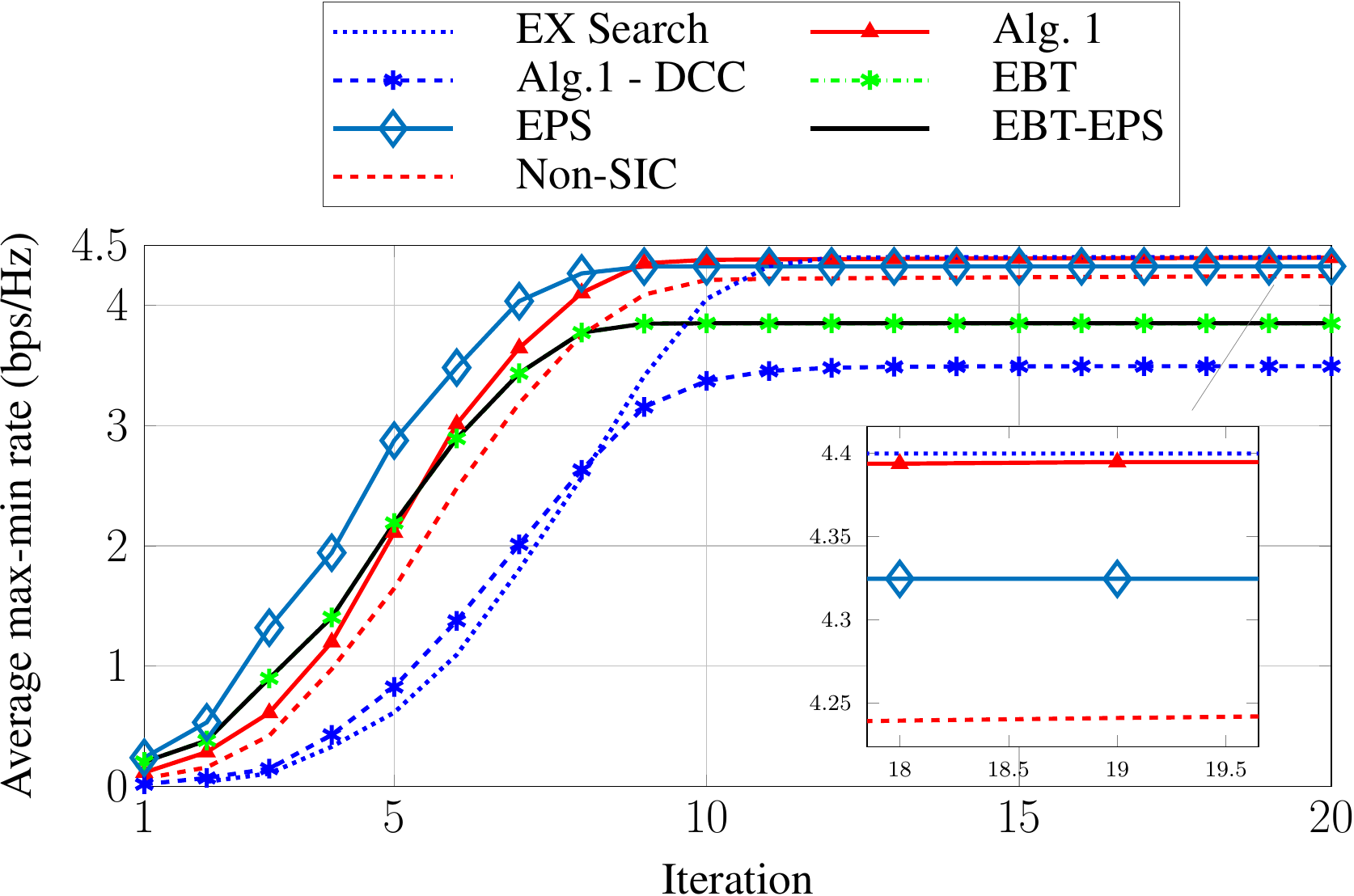}
			\vspace{-0pt}
			\label{fig: convergence}
		}
	\end{subfigure}
	\hfill
	\begin{subfigure}[Cumulative distribution function (CDF) of the  max-min e2e rate.]
		{\includegraphics[width=.28\textwidth, trim={0cm, 0cm, 0cm, 0cm}]{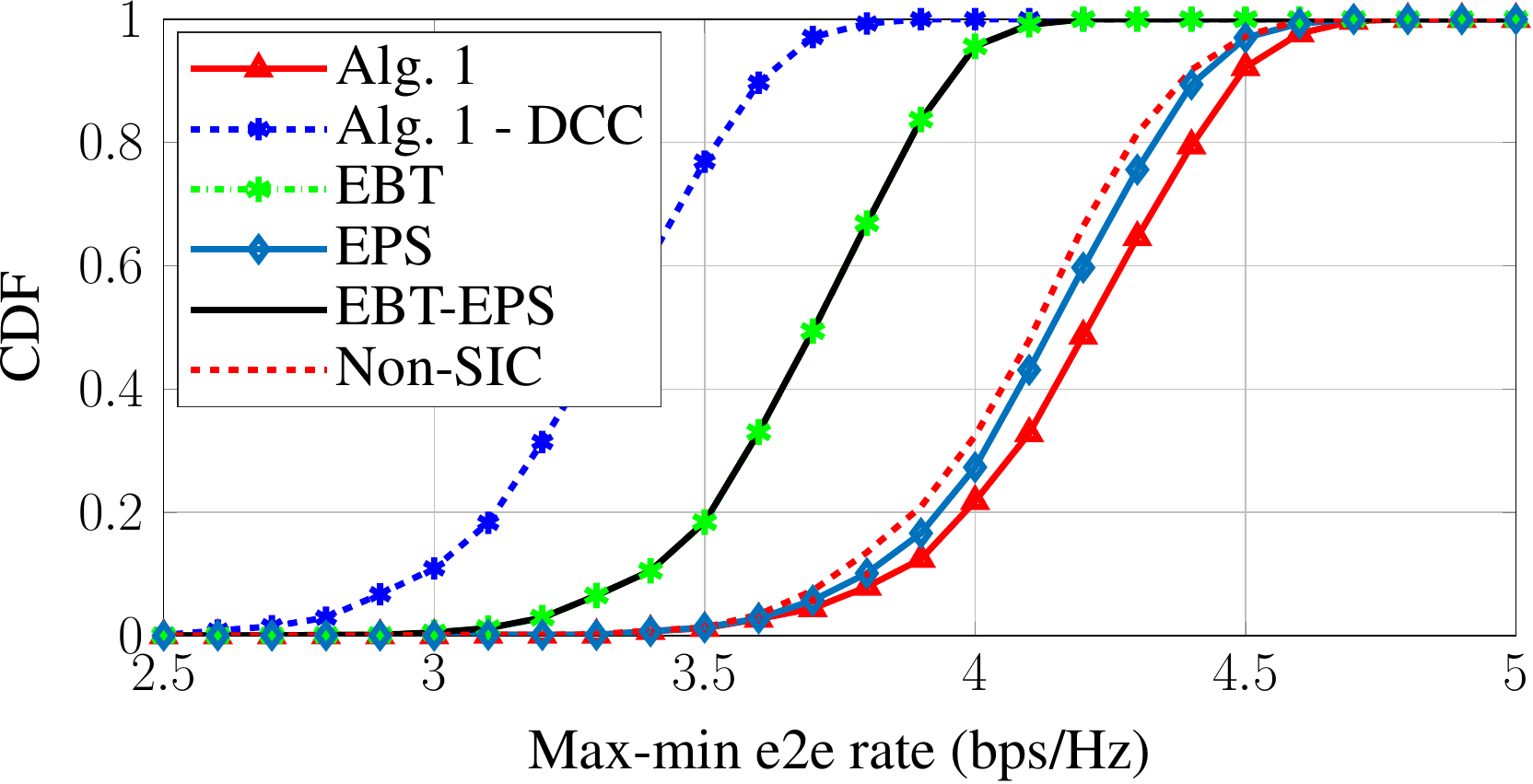}
			\vspace{-0pt}
			\label{fig: CDF}
		}
	\end{subfigure}
	\vspace{-5pt}
	\caption{Performance comparison for  different resource allocation schemes ($P^{\max}_\mathtt{S} = 18 \text{ dBm}$ and $ N=8$).}\label{fig: cdf and convergence}
\end{figure}

 Fig. \ref{fig: convergence} depicts  the convergence behavior of the proposed algorithm with different resource allocation schemes over a random channel. We have numerically observed that the proposed algorithm requires a maximum of two iterations to output an initial feasible point. As can be seen, Algorithm \ref{alg_1} converges after a few iterations and achieves the max-min rates very close to  the exhaustive search (i.e., EX search) method.   Another observation is that the EBT, EPS and EBT-EPS schemes converge faster due to less optimization variables, but their performance is inferior to that of Algorithm \ref{alg_1}. As expected in Fig. \ref{fig: CDF},   the proposed Algorithm \ref{alg_1} is able to maintain better e2e rate fairness among all user pairs, compared to other schemes.

\section{Conclusion}
In this letter, we  proposed a new and practical  PSR architecture for multi-pair wireless-powered DF relaying networks, which enables  SIC at the relay and allows to directly use the harvested AC power for activating computational blocks. We first
formulated the problem of max-min e2e rate fairness among all user
pairs by jointly designing the power control, beamforming, fractional time  and power splitting ratios, and then developed a low-complexity
solution by employing  the IA optimization framework. The effectiveness of the proposed method was demonstrated by  numerical results.

\appendices
\renewcommand{\thesectiondis}[2]{\Alph{section}:}
\section*{Appendix: Proof of Lemma  \ref{lemma:1}} \label{app: lemma:1}
\renewcommand{\theequation}{\ref{app: lemma:1}.\arabic{equation}}\setcounter{equation}{0}
We first note that constraints \eqref{MMReqi:g} and \eqref{MMReqi:h} must hold with equalities at  optimum. We now prove Lemma  \ref{lemma:1} by verifying that constraints \eqref{MMReqi:b}-\eqref{MMReqi:d} are active at optimum by contradiction. Let $(\bp^\star, \bw^\star,\boldsymbol{\tau}^\star, \boldsymbol{\alpha}^\star,\boldsymbol{\psi}^\star,\beta^\star,r^\star)$ be an optimal solution of \eqref{MMReqi}. Suppose that \eqref{MMReqi:b}-\eqref{MMReqi:d} are inactive, i.e., $\ln\bigl(1+ 1/\psi_{i,k}^\star\bigr)/\tau_i^\star > r^\star$, $\gamma_{1,k}(\bp^\star,1/\alpha_1^\star) > 1/\psi_{1,k}^\star$ and $\gamma_{2,k}(\bw^\star)  >  1/\psi_{2,k}^\star$ for some $i,k$. There exists $\psi_{i,k}'$ such that $\psi_{i,k}' < \psi_{i,k}^\star$, $\gamma_{1,k}(\bp^\star,1/\alpha_1^\star) > 1/\psi_{1,k}'$ and $\gamma_{2,k}(\bw^\star)  >  1/\psi_{2,k}'$. Then, there may also exist a positive constant $\Delta r >0$ to satisfy $\ln\bigl(1+ 1/\psi_{i,k}'\bigr)/\tau_i^\star = r^\star + \Delta r$. 
As a result, $r^\star + \Delta r$ and $\psi_{i,k}'$ are also feasible to \eqref{MMReqi}, yielding a strictly larger
objective. This contradicts the optimality assumption of $(\bp^\star, \bw^\star,\boldsymbol{\tau}^\star, \boldsymbol{\alpha}^\star,\boldsymbol{\psi}^\star,\beta^\star,r^\star)$, and thus completes the proof.

\bibliographystyle{IEEEtran}
 \bibliography{Journal}
\end{document}